\documentclass[11pt,letterpaper]{article}

\usepackage[margin=1in]{geometry}
\usepackage[T1]{fontenc}
\usepackage{lmodern}
\usepackage{microtype}

\usepackage{amsmath, amssymb, amsthm, mathtools}
\usepackage{graphicx,booktabs, enumitem, xcolor}
\usepackage[ruled]{algorithm2e}
\usepackage{hyperref}
\usepackage[nameinlink,capitalise,noabbrev]{cleveref}
\usepackage{newtxtext,newtxmath}
\usepackage{bm}
\newtheorem{theorem}{Theorem}[section]
\newtheorem{lemma}[theorem]{Lemma}

\newtheorem{corollary}[theorem]{Corollary}

\theoremstyle{definition}

\theoremstyle{remark}
\newtheorem{remark}[theorem]{Remark}

\newcommand{\R}{\mathbb{R}}

\newcommand{\Q}{\mathbb{Q}}
\newcommand{\Z}{\mathbb{Z}}

\newcommand{\Ot}{\widetilde{O}}

\newcommand{\xsol}{x^*}

\makeatletter
\renewcommand{\maketitle}{%
  \begin{center}
  {\LARGE\bfseries\@title\par}
  \ifx\@author\@empty\else\vspace{0.7em}{\large\@author\par}\fi
  \vspace{0.8em}{\normalsize\@date\par}
  \end{center}
  \vspace{0.4em}}
\makeatother

\title{\bfseries Solving Linear Systems in $\bm{{\Ot(mn \log \frac{\kappa}{\epsilon})}}$ Bit Operations}
\author{
    Jonathan A.\ Kelner\\
    MIT\\
    \texttt{kelner@mit.edu}
}

\date{}

\begin{document}

\maketitle
\begin{abstract}
We give a deterministic algorithm that solves a nonsingular linear system $Ax=b$, where $A\in\R^{n\times n}$ has $m$ nonzero entries and condition number $\kappa$, to any relative residual tolerance $0<\epsilon\le1/2$ using $\Ot(mn\log(\kappa/\epsilon))$ bit operations for inputs with logarithmically many bits per entry.
For sparse, polynomially conditioned systems with $m=\Ot(n)$, this gives an $\Ot(n^2)$ algorithm for any inverse-polynomial accuracy, improving on the algorithm of Peng and Vempala, as sharpened by Nie, whose running time in this regime is approximately $O(n^{2.2707})$ with the best current matrix multiplication exponent,
	and largely closing a gap between the idealized performance of the conjugate gradient method in exact arithmetic and the running time achievable with finite-precision computation that has persisted for over 70 years.

The algorithm is surprisingly simple. For integer inputs, we apply Dixon's lifting algorithm to the perturbed system $(A+I/R)x=b$ for a suitable integer $R$.
After scaling, the matrix of this system is $RA+I\equiv I\pmod R$, so its modular inverse is trivial, and each lifting step needs only a sparse matrix-vector product with $A$ on $O(\log R)$-bit numbers.
Fast rational reconstruction then recovers the exact solution of the perturbed system, which is an $\epsilon$-accurate solution of the original one.
Normalization and rounding extend the result to fixed-point and floating-point inputs, with floating-point outputs represented using short integer significands and a common encoded exponent.
A computable certificate removes the need for prior knowledge of $\kappa$.
\end{abstract}

\section{Introduction}

Solving systems of linear equations is a foundational problem in mathematics and computation,
with pervasive applications in both theory and practice.
There is an extensive literature aimed at developing stable and efficient algorithms for doing so for various classes of linear systems, and the study of such algorithms has led to fundamental ideas and techniques in algorithms, optimization, and numerical analysis.
Nevertheless, there remain substantial gaps in our understanding of the complexity of the problem, particularly in the case when the matrices involved are sparse.

Let $A\in\R^{n\times n}$ be an invertible matrix with $m$ nonzero entries and condition number $\kappa=\|A\|_2\|A^{-1}\|_2$, and let $b\in \R^n$.
 The choice of algorithm and best known complexity bounds for solving $Ax=b$ depend on the relationship between $m$ and $n$, as well as on whether we assume exact arithmetic or account for the bit complexity of finite-precision operations.

In exact arithmetic, classical algorithms based on Gaussian elimination and triangular factorization use $O(n^3)$ arithmetic operations.
Bunch and Hopcroft~\cite{BunchHopcroft1974} showed how to accelerate matrix factorization and inversion using fast matrix multiplication, which allows one to solve linear systems using $O(n^{\omega})$ arithmetic operations, where $\omega< 2.371177$~\cite{dupont2026improving} is an admissible matrix-multiplication exponent.
For sufficiently dense general $A$, this remains the fastest algorithm known.

Demmel, Dumitriu, and Holtz~\cite{DemmelDumitriuHoltz2007} showed that a wide range of algorithms based on fast matrix multiplication can be implemented stably in finite-precision arithmetic.
If the entries of $A$ and $b$ are represented as $B$-bit fixed-point values, their work allows one to find a vector $\widehat{x}$ such that 
$\|A\widehat x-b\|_2\le\epsilon\|b\|_2$ 
in time $\Ot\left(n^{\omega}(B+\log \epsilon^{-1}+\log \kappa)\right)$.%
\footnote{The running time is stated in~\cite{DemmelDumitriuHoltz2007} as $O(n^{\omega+\eta})$ for any $\eta>0$, but Peng and Vempala observe that the analysis in~\cite{DemmelDumitriuHoltz2007} gives $\Ot(n^\omega)$~\cite[Section~6.3, footnote~3,
p.~59]{PengVempalaArxivV2}.} 
When entries of $A$ and $b$ have logarithmically many bits and $1/\epsilon$ and $\kappa$ are polynomial in $n$, this matches the result in exact arithmetic up to polylogarithmic factors.

When $A$ is sparse, however, there has been a longstanding significant gap between the exact and finite-precision cases, dating back to 1952,
when Hestenes and Stiefel~\cite{HestenesStiefel1952} introduced the conjugate gradient method, and Lanczos published a closely related algorithm~\cite{lanczos1952solution}.
For symmetric positive definite $A$, these papers gave iterative algorithms and showed that, if all arithmetic is exact, they solve $Ax=b$ in at most $n$ iterations.	
Each iteration consists of a constant number of matrix-vector multiplications and vector operations, giving a cost of $O(m)$ arithmetic operations per iteration, and thus $O(mn)$ overall.
For a general nonsingular $A$, one can instead solve $A^T A x =A^T b$.
While $A^T A$ may no longer be sparse, one can still multiply a vector by it using $O(m)$ operations, resulting in an $O(mn)$ algorithm to solve linear systems in exact arithmetic.

However, the exact-arithmetic termination bound does not by itself yield a comparable bit-complexity bound.
Exact rational quantities can grow in size, while rounding can destroy the $n$-step termination guarantee~\cite{Greenbaum1989,MuscoMuscoSidford2018}.
Storing intermediate values exactly requires a number of bits that can grow linearly with $n$, even when the entries of $A$ and $b$ are bounded integers.
This introduces a factor of $n$ in the running time when accounting for bit complexity, making it asymptotically uncompetitive with the dense algorithms above.

A blowup in the required bit complexity arises from genuine numerical instability, not just a weakness in the analysis, and round-off errors can substantially degrade convergence in practice. 
Conjugate gradient remains efficient for well-conditioned matrices, and it has been shown to stably solve symmetric positive definite linear systems in $\Ot(\sqrt{\kappa} \log{\epsilon^{-1}})$ iterations using $O(\log(n\kappa/\epsilon))$ bits of precision for all operations when implemented appropriately.
However, there are simple instances with polynomially bounded entries and polynomial condition number that, for any $c>0$, appear to require $\Omega(n/\log n)$ bits of precision to converge in $O(n^c)$ steps~\cite{MuscoMuscoSidford2018}, assuming current upper bound techniques qualitatively capture the algorithm's actual behavior.\footnote{Current analyses of finite-precision conjugate gradient relate convergence guarantees to the existence of polynomials that are small near the eigenvalues of $A$~\cite{Greenbaum1989}.  
These bounds appear predictive of practical performance, but they are upper bounds, not lower bounds. In~\cite{MuscoMuscoSidford2018}, the authors show lower bounds on the degrees of such polynomials, not formal convergence lower bounds.}

As such, except in the presence of strong condition number bounds, there were no known algorithms that solved general sparse linear systems asymptotically faster than dense ones.
This remained the case until a 2021 breakthrough result by Peng and Vempala~\cite{PengVempala2021}.
They combined a carefully constructed randomized block Krylov method with fast matrix multiplication and a recursive matrix factorization to obtain a running time of 
\[
\Ot\!\left(
\max\left\{
n^2 m^{\frac{\omega-2}{\omega-1}},
\;
n^{\frac{5\omega-4}{\omega+1}}
\right\}
\log^2\!\left(\frac{\kappa}{\epsilon}\right)
\right)
\] 
for inputs whose entries fit in logarithmically many bits, with a supplied condition-number bound and a randomized high-probability guarantee.
In the polynomial-conditioning, inverse-polynomial-accuracy regime, a later matrix anti-concentration bound by Nie~\cite{Nie2022}
eliminated the second dimension term, giving the running-time bound
\[
\Ot\!\left(
n^2 m^{\frac{\omega-2}{\omega-1}}
\right).
\]
For sparse matrices with $m=\Ot(n)$, polynomial condition number, logarithmically many bits per entry, and inverse-polynomial $\epsilon$, this gives a running time of approximately $\Ot(n^{2.2707})$ using the current bound $\omega<2.371177$.

In this paper, we give an algorithm that solves linear systems in 
\[	
	\Ot\left(m n \log\left(\frac{\kappa}{\epsilon}\right)\right)
\] 
bit operations for inputs with logarithmically many bits per entry, 
closing much of the gap between exact and finite-precision arithmetic for sparse linear systems.
For sparse matrices with $m=\Ot(n)$ nonzero entries and polynomial condition number, this solves $Ax=b$ to any inverse-polynomial error in time $\Ot(n^2)$.

Our algorithm is surprisingly simple.
Rather than building on conjugate gradient and trying to make it stable,
we start with an algorithm by Dixon~\cite{Dixon1982} for \emph{exactly} solving linear systems with integer entries and try to make it faster.
Dixon's algorithm works by solving the linear system modulo $p^k$ for increasing powers of a prime $p$.
It starts by computing an inverse of $A$ modulo $p$ and applies it to obtain a solution modulo $p$.
Given a solution modulo $p^k$, Dixon shows one can use this inverse and a small amount of additional computation to produce a ``lifted'' solution modulo $p^{k+1}$. Starting from the solution modulo $p$, performing $L-1$ successive lifts gives a solution modulo $p^L$.
Given such a solution for sufficiently large $L$, one can use a variant of the extended Euclidean algorithm to recover an exact rational solution. 

As described, this appears to be too slow---
even the first step of finding a modular inverse 
using a general dense inversion algorithm gives
the $O(n^\omega)$ bound on which we were hoping to improve, and subsequent iterations
 involve  applying a potentially dense modular inverse.
However, we do not need an exact rational solution, just an approximate one.
This allows us to add a small perturbation to our input and solve a nearby system instead.
It turns out that we can do this in a particularly nice way that makes modular inversion trivial and allows the iterations to work with matrices with $O(m)$ nonzero entries.
Combining this with an existing faster algorithm for the rational recovery step yields our desired running time.

\subsection{Main Result}
 
For generality, we state our main result for inputs expressed in floating point.
As we show in~\cref{sec:floating_point}, this implies the analogous result for fixed-point inputs as a simple corollary.

We represent floating-point numbers as $u2^e$, where $u,e\in\Z$ are stored in binary, and we write $\operatorname{bits}(w)=1+\lceil\log_2(|w|+1)\rceil$ for the signed binary length of an integer, which we use to quantify the complexity of $u$ and $e$.
The matrix is supplied as a list of its nonzero entries and their row and column indices, and the right-hand side is supplied as a vector of encoded entries.
Throughout the paper, we write $\kappa_2(A)=\|A\|_2\|A^{-1}\|_2$, or simply $\kappa$ when $A$ is clear, logarithms are base $2$,
and the notation $\Ot$ suppresses factors polylogarithmic in the dimension and bit-length, not polynomial factors in the numerical magnitudes of encoded exponents.

\begin{theorem}[Main] \label{thm:main}
Let $A\in\R^{n\times n}$ be nonsingular with $m$ nonzero entries, and let $b\in\R^n$.
Suppose every entry is supplied exactly as $u2^e$ with
$\operatorname{bits}(u)\le B_{\rm sig}$ and $\operatorname{bits}(e)\le B_{\rm exp}$.
Let $\epsilon=2^{-E}$ for an integer $E\ge1$, and let
\[
    \rho=1+E+\lceil\log_2\kappa_2(A)\rceil+\lceil\log_2(n+1)\rceil.
\]
There is a deterministic algorithm that returns integers $q_1,\ldots,q_n,\gamma$ such that
\[
    \widehat x_i=q_i2^\gamma,
    \qquad \|A\widehat x-b\|_2\le\epsilon\|b\|_2.
\]
Its running time and storage are, respectively,
\begin{align*}
    &\Ot\!\left((m+n)(B_{\rm sig}+B_{\rm exp})+mn\rho\right)
        &&\text{bit operations},\\
    &\Ot\!\left((m+n)(B_{\rm sig}+B_{\rm exp})+n^2\rho\right)
        &&\text{bits}.
\end{align*}
Each $q_i$ has $O(\rho)$ bits, and the common exponent $\gamma$ has
$O(B_{\rm exp}+\log(B_{\rm sig}+\rho+1))$ bits.
The algorithm does not require a supplied condition-number bound.
\end{theorem}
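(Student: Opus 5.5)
The plan has three stages. First, reduce to an integer system. Second, solve a perturbed integer system exactly by Dixon lifting modulo powers of $R$. Third, wrap the result in a doubling search driven by a computable certificate, so that no bound on $\kappa$ is needed.

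\emph{Reduction.} I will normalize the floating-point input. Within each row, and within $b$, I shift exponents so that the largest-magnitude entry has a fixed exponent. Entries that are smaller by more than about $\rho+\log n$ binary orders relative to $\|A\|_F$ (respectively $\|b\|_2$) are rounded, and so are their significand bits below that threshold. The result is an integer matrix $\tilde A$ and an integer vector $\tilde b$ whose entries have $O(\rho+B_{\rm sig})$ bits, with $\tilde A = 2^{s}A + \Delta$ and $\|\Delta\|_2\le \|2^sA\|_2/(4\kappa)\cdot\epsilon$. Standard perturbation bounds then show that a $\epsilon/4$-residual solution of the rounded system is an $\epsilon$-residual solution of the original, after rescaling by the common power $2^{\gamma}$. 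Computing the exponents costs $O((m+n)(B_{\rm sig}+B_{\rm exp}))$. Only differences of exponents are ever materialized, which keeps $\gamma$ at $O(B_{\rm exp}+\log(B_{\rm sig}+\rho))$ bits. I also reduce significands to $O(\rho)$ bits, so that the entries of $\tilde A$ and $\tilde b$ have $O(\rho)$ bits.

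\emph{Core solve.} Take $R=2^{r}$ with $r=\Theta(\rho)$ chosen so that $\|\tilde A^{-1}\|_2\,\|\tilde A\|_2/R\le \epsilon/8$; roughly, $R\ge 8\kappa/\epsilon$ times a norm scaling. Then solve $M y=R\tilde b$ with $M=R\tilde A+I$ exactly. Since $M\equiv I\pmod R$, Dixon lifting is as follows. Set $c_0=R\tilde b$. At each step, take the $R$-adic digit $x_k = c_k \bmod R$ and update $c_{k+1}=(c_k-Mx_k)/R$. Each step is one sparse product with $\tilde A$ on vectors of $O(\rho+\log n)$ bits, since $\|c_k\|_\infty$ stays bounded by about $n\|\tilde A\|_{\max}\cdot R$. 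Each step therefore costs $\Ot(m\rho)$ bit operations.

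Hadamard's bound gives $\log|\det M| = O(n\rho)$, so $L=O(n)$ lifts give $y \bmod R^{L}$ with $R^L$ exceeding twice the square of the Cramer numerator and denominator bounds. The total lifting cost is $\Ot(mn\rho)$. Fast (half-gcd) rational reconstruction, done with a common denominator in the style of Dixon, then recovers $y = N/D$ in $\Ot(n\cdot n\rho)$ time. Here I must be careful: reconstructing each coordinate independently would cost $n$ times $\Ot(n\rho)$, which equals $\Ot(n^2\rho)\le \Ot(mn\rho)$ and is acceptable. Storing the $L$ digits takes $O(n^2\rho)$ bits. Finally, $\tilde A y-\tilde b=-y/R$, so $\|\tilde A y - \tilde b\|\le \|M^{-1}\|\,\|\tilde b\|\le 2\|\tilde A^{-1}\|\,\|\tilde b\|/R\le(\epsilon/4)\|\tilde b\|$. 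The exact rational $y$ is then rounded to a common exponent with $O(\rho)$-bit significands $q_i$; by the same perturbation argument this rounding costs only a constant factor in the residual.

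\emph{Unknown $\kappa$ and the main obstacle.} I will guess $r=2^j$ and double it. After each attempt, I round and check the residual $\|\tilde A\widehat x-\tilde b\|\le (\epsilon/2)\|\tilde b\|$ exactly in integer arithmetic, at cost $\Ot(m\rho)$. Success certifies the answer. Once $r$ reaches the threshold, success is guaranteed, and the geometric sum of the costs is dominated by the final attempt.

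The delicate point is the reduction step without knowing $\kappa$: the rounding precision in the first stage depends on $\kappa$. I would therefore redo the normalization inside each doubling round at precision $\Theta(r)$. I also have to make sure that the certificate is checked against the \emph{original} $A$ and $b$ rather than the rounded ones. I expect to handle this by checking the residual against a tightened threshold and absorbing the rounding error $\|\Delta\|\,\|\widehat x\|$, bounding $\|\widehat x\|$ through $\|y\|$. Making this error accounting, together with the bit sizes of the exponents, fully rigorous is where I expect most of the work to be.
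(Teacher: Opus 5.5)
Your outline is the paper's proof: normalize by powers of two, round to $O(h)$-bit integers, run Dixon lifting on $I+R\tilde A$ with the identity as modular inverse, reconstruct each coordinate by fast rational reconstruction, round to a common dyadic grid, and double the precision under a certificate. Two points need repair.

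First, the normalization must be a single shift for all of $A$ and a separate one for $b$, as in \eqref{eq:fp_scale_identity}, which transfers the relative residual exactly. Per-row shifts, as you first describe them, cause one of two problems. If $b$ is shifted independently of the rows, they change the solution. If the same shifts $D$ are applied to $b$, the guarantee becomes $\|D(A\widehat x-b)\|_2\le\epsilon\|Db\|_2$, which falls short of the required bound by a factor up to $\kappa_2(D)$ that your accounting never pays for. Your later formula $\tilde A=2^sA+\Delta$ is the global version, so use that.

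Second, the certificate is the step you leave open, and the test as you first state it does not certify the original system. At insufficient precision $\tilde A$ can be singular or badly conditioned. A small $\|\tilde A\widehat x-\tilde b\|$ then says nothing about $\|A\widehat x-b\|$ when $\|\widehat x\|$ is large.

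Your proposed remedy, charging $\|\Delta\|\,\|\widehat x\|$, is the right one, and Lemma~\ref{lem:rounding_certificate} makes it clean. Since $y$ solves $\widetilde Ay=\widetilde b$ \emph{exactly}, with $\widetilde A=C/S+I/(RS)$ and $\widetilde b=d/S$, the paper folds the $I/(RS)$ term into the input perturbation. This gives a priori bounds $\|\widetilde A-A_0\|_2\le n/S$ and $\|\widetilde b-b_0\|_2\le\sqrt n/(2S)$. With $\|b_0\|_2\ge1/2$, the relative residual against the normalized original data is then at most $2n^2(\|y\|_\infty+1)/S+n^2/P$.

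The acceptance test is therefore just $8n^2(\|y\|_\infty+1)\le\epsilon S$, with no matrix--vector product. It must pass once $S\ge64\kappa n^3/\epsilon$, because then $\sigma_{\min}(\widetilde A)\ge1/(4\kappa)$, which forces $\|y\|_\infty\le4\kappa\sqrt n$. The same inequality bounds $\|y\|_\infty$ at acceptance, and that is what gives $O(\rho)$-bit $q_i$ without knowing $\kappa$.

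Also note that every round is well defined even when $\tilde A$ is singular, since $\det(I+R\tilde A)\equiv1\pmod R$. Your bounds via $\|\tilde A^{-1}\|$ should be invoked only at or above the threshold precision, not for failed rounds.
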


With polylogarithmic encodings, polynomial conditioning, and inverse-polynomial accuracy, the running time is $\Ot(mn)$. If $A$ is sparse with $m=\Ot(n)$ nonzero entries, this becomes $\Ot(n^2)$.

The core of our algorithm is an algorithm for the case where the entries of $A$ and $b$ are integers, which we present in~\cref{sec:integer_solver}.
In~\cref{sec:floating_point}, we prove~\cref{thm:main} by showing how to normalize and round floating-point inputs to obtain an integer system, which we solve using an algorithm from~\cref{sec:integer_solver}.
Fixed-point inputs are then a straightforward specialization, which we discuss briefly after the proof.



\section{Background on Dixon's Algorithm and Rational Reconstruction}
Our algorithm is based on Dixon's algorithm~\cite{Dixon1982} for finding the exact rational solution to a linear system with integer entries.
It works by finding a $p$-adic approximation of the solution to a sufficiently high precision to uniquely determine the rational coordinates of the solution vector, and then using an algorithm based on the extended Euclidean algorithm to recover the solution. 
For completeness, we review it here.

\subsection{Dixon's Algorithm for Exactly Solving Systems with Integer Entries }
Dixon's algorithm finds the exact rational solution to $M x =b$ for an invertible matrix $M\in \Z^{n \times n}$ and a vector $b\in \Z^n$. 	 
Dixon's original paper worked modulo a prime $p$, 
whereas we work modulo any integer $R\geq 2$ with $\gcd(R, \det(M))=1$.
This does not require any substantive changes to Dixon's original arguments. 
Otherwise, the presentation and proofs below follow those in~\cite{Dixon1982}. 
The algorithm has three main steps:

\paragraph{Step 1: Invert $\bm{M}$ mod $\bm{R}$:} Use an appropriate matrix inversion algorithm to find an integer matrix $C$ such that $M C \equiv I \pmod R$, with the entries of $C$ in $\{0,\dots,R-1\}$.

\paragraph{Step 2: Use successive lifting to solve $\bm{Mx \equiv b \pmod{R^L}}$ for sufficiently large $L$:}
We iteratively construct a sequence of vectors 
	$\{x_i\}_{i=0}^{L-1}$ with entries in $\{0,\dots,R-1\}$
	such that the sums  
	\[
		y_k:=\sum_{i=0}^{k-1} R^i x_i
	\]
 	obey
	\begin{equation}\label{eq:lifting_eq}
		M y_k \equiv b \pmod{R^k},
	\end{equation}
as well as an auxiliary sequence of vectors $\{b_i\}_{i=0}^{L}$ that keep track of the residual at each step.
To do so, we initialize $b_0:=b$ and compute $\{x_i\}_{i\geq 0}$ and $\{b_i\}_{i\geq 1}$ using the recurrence:
\begin{align}
	x_i &:= C b_i \bmod R \label{eq:x_update}\\
	b_{i+1}&:=\frac{b_i - M x_i}{R}\label{eq:b_update}.
\end{align}
Assuming $b_i\in\Z^n$, Equation~\eqref{eq:x_update} gives $M x_i \equiv M C b_i \equiv b_i \pmod{R}$, so the division in~\eqref{eq:b_update} is exact and $b_{i+1}\in\Z^n$.

To see that these vectors obey~(\ref{eq:lifting_eq}), note that rearranging the definition of $b_{i+1}$ gives $M x_i = b_i - R b_{i+1}$, so
$$M y_k=\sum_{i=0}^{k-1}R^i M x_i
=\sum_{i=0}^{k-1} R^i (b_i - R b_{i+1})=b_0-R^k b_k \equiv b \pmod{R^k}$$
as claimed.

During lifting, the algorithm stores the $x_i$ but does not repeatedly assemble the growing prefixes $y_k$.
The final vector $y_L$ is assembled once for reconstruction.  When $R=2^s$, this is a concatenation of $s$-bit blocks and costs $O(nLs)$ bit operations.

\paragraph{Step 3: Reconstruct the rational solution from $\bm{y_L}$:} Let $\xsol \in \Q^n$ be the exact rational solution to $Mx=b$, and write its $i$th coordinate as a ratio of integers $\xsol_i=f_i/g_i$ with $\gcd(f_i,g_i)=1$.
We have $\gcd(g_i, R)=1$ by Cramer's rule and the assumption that $\gcd(R,\det(M))=1$, so $g_i$ is invertible in $\Z/R^k\Z$ for any $k\geq 1$.
This allows us to map $f_i/g_i$ into $\Z/R^k\Z$ by multiplying $f_i$ by the inverse  of $g_i$.
We refer to the result as $\xsol_i \bmod R^k$, and it is straightforward to check that the resulting vector $\xsol \bmod R^k$ is the solution to  $Mx\equiv b \pmod{R^k}$. 

This implies that $\xsol \equiv y_L \pmod{R^L}$, where $y_L$ is the vector constructed in Step 2, and our goal is to reconstruct $\xsol$ from $y_L$.
In general, multiple rational numbers map to the same values modulo $R^L$.
However, if we assume bounds on the entries of $M$ and $b$, we can prove bounds on $f_i$ and $g_i$. 
Dixon showed that, for sufficiently large $L$, there is a unique rational vector equivalent to $y_L$ that obeys these bounds, and he showed how to find it efficiently using a variant of the extended 	Euclidean algorithm.
We formalize the conditions for unique reconstruction in the following theorem, which slightly modifies Dixon's original guarantees to 
put them in a form that is easier to apply in our setting.

\begin{theorem}\label{thm:reconstruction}
Let $M\in \Z^{n\times n}$ be invertible, let $b\in \Z^n$, and 
let $R\geq 2$ and $L\geq1$ be integers such that $\gcd(R,\det M)=1$. 
Suppose $M$ and $b$ satisfy entry-wise bounds
\[
  |M_{ij}| \leq T, \qquad |b_i| \leq T 
\] 
for some $T\geq 1$ and for all $i$ and $j$, and let $H$ be an integer with
\[
	H \geq n^{n/2} T^n.
\] 
Suppose that $v\in \Z^n$ satisfies
\[
	M v \equiv b \pmod{R^L},
\]
and that
\begin{align}
R^L > 2 H^2. \label{eq:L_bound}
\end{align}
Then $\xsol=M^{-1}b$ is the unique vector in $\Q^n$ congruent to $v$ modulo $R^L$ such that, when its coordinates are written as fractions $a_i/d_i$ in lowest terms, 
\[
	|a_i|\leq H, \qquad 1\leq d_i \leq H
\] 
for all $i$, where congruence of a rational number modulo $R^L$ is defined only when its reduced denominator is coprime to $R^L$.
\end{theorem}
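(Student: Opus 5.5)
The plan has two parts: existence (that $\xsol$ itself satisfies the stated properties) and uniqueness (that any other rational vector with these properties must coincide with $\xsol$).

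\textbf{Existence.} By Cramer's rule, $\xsol_i=\det(M_i)/\det(M)$, where $M_i$ is $M$ with its $i$th column replaced by $b$. Both $M$ and each $M_i$ are integer matrices whose columns have Euclidean norm at most $\sqrt{n}\,T$. Hadamard's inequality therefore gives $|\det M|,|\det M_i|\le n^{n/2}T^n\le H$.

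Reducing the fraction $\det(M_i)/\det(M)$ to lowest terms $a_i/d_i$, with $d_i>0$, only shrinks absolute values. So $|a_i|\le H$ and $1\le d_i\le H$.

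Since $d_i$ divides $\det M$ and $\gcd(R,\det M)=1$, we get $\gcd(d_i,R^L)=1$, so $\xsol_i \bmod R^L$ is defined. Multiplying $M\xsol=b$ through by $\det M$ and reducing modulo $R^L$ shows that $\xsol\bmod R^L$ solves $Mx\equiv b\pmod{R^L}$. Because $\det M$ is invertible modulo $R^L$, this congruence has a unique solution modulo $R^L$, namely $M^{-1}=\operatorname{adj}(M)(\det M)^{-1}$ applied to $b$. The given $v$ also solves it, so $\xsol\equiv v\pmod{R^L}$. This settles existence.

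\textbf{Uniqueness.} Suppose $a'_i/d'_i$ in lowest terms satisfies the same bounds, has $\gcd(d'_i,R)=1$, and is congruent to $v_i$ modulo $R^L$. Then $a_i d_i^{-1}\equiv a'_i d_i'^{-1}\pmod{R^L}$. Multiplying by the unit $d_id'_i$ gives $a_id'_i\equiv a'_id_i\pmod{R^L}$.

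The integer $a_id'_i-a'_id_i$ has absolute value at most $|a_i|d'_i+|a'_i|d_i\le 2H^2<R^L$, using \eqref{eq:L_bound}. An integer with absolute value below $R^L$ that is divisible by $R^L$ must be zero, so $a_id'_i=a'_id_i$. Both fractions are reduced with positive denominators, so they are equal.

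The only step needing care is the Hadamard bound, specifically confirming that replacing a column by $b$ keeps every column norm at most $\sqrt n\,T$. This is immediate from $|b_i|\le T$. The rest is routine modular bookkeeping, with the condition $\gcd(R,\det M)=1$ used both for definedness and for the uniqueness of the modular solution.
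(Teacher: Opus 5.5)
Your proposal is correct and follows the paper's proof essentially step for step. It uses Cramer's rule and Hadamard's inequality for the height bounds, coprimality of the reduced denominators with $R$ and invertibility of $M$ modulo $R^L$ to get $\xsol\equiv v \pmod{R^L}$, and the cross-multiplication bound $|a_id_i'-a_i'd_i|\le 2H^2<R^L$ for uniqueness; your explicit use of $\operatorname{adj}(M)(\det M)^{-1}$ only spells out the paper's one-line appeal to invertibility of $M$ modulo $R^L$.
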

\begin{proof}
Let $M_i$ be the matrix obtained from $M$ by replacing its $i$th column with $b$,
and write each coordinate $\xsol_i=f_i/g_i$ as a fraction in lowest terms with $g_i>0$. 
By Cramer's rule, $\xsol_i=\det M_i/{\det M}$. 
The columns of $M$ and $M_i$ all have norm at most $\sqrt{n}T$, so 
Hadamard's inequality gives
\[
    |{\det M}|,\ |{\det M_i}|
    \le n^{n/2}T^n\le H
\]
for all $i$, and thus $|f_i| \leq H$ and $1\leq g_i\leq H$, as required. 

Since $g_i$ divides $\det M$, and $\gcd(R,\det M)=1$ by assumption, we have $\gcd(R^L ,g_i)=1$,
so it makes sense to talk about $\xsol_i \bmod{R^L}$.
Since
\[
	M\xsol \equiv b \equiv Mv \pmod{R^L},
\]
and $M$ is invertible modulo $R^L$, we have $\xsol \equiv v \pmod{R^L}$.  
The vector $\xsol$ therefore satisfies the stated conditions.

For uniqueness, let $y$ be any vector meeting these conditions, and write its coordinates as fractions
$y_i=a_i/d_i$ in lowest terms.
Since $y_i \equiv \xsol_i \pmod{R^L}$, we have 
\[
R^L \mid (a_i g_i - f_i d_i). 
\] 
However, our assumption in~\eqref{eq:L_bound} implies that
\[
	\left| a_i g_i - f_i d_i\right|
	\leq  |a_i| g_i + |f_i| d_i
	\leq 2H^2
	<R^L,	
\]
so we must have $a_i g_i - f_i d_i=0$, and thus  $y_i =\xsol_i$ for all $i$.
\end{proof}

\subsection{Running Time of Dixon's Algorithm}
The following lemma allows us to bound the bit complexity of the quantities computed by Dixon's algorithm. Here $\|M\|_\infty=\max_i\sum_j|M_{ij}|$ is the maximum absolute row sum:
\begin{lemma}\label{lemma:residual_bound}
	The residual vectors computed during Step 2 of Dixon's algorithm have
	\[
	\|b_j\|_{\infty} \leq W:= \max\left(\|b\|_{\infty},\|M\|_{\infty} \right).
	\]
\end{lemma}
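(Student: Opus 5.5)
We argue by induction on $j$, using only the recurrence~\eqref{eq:b_update} and the fact that the digits $x_j$ computed in~\eqref{eq:x_update} have entries in $\{0,\dots,R-1\}$.

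\emph{Base case.} We have $b_0=b$, so $\|b_0\|_\infty=\|b\|_\infty\le W$.

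\emph{Inductive step.} Suppose $\|b_j\|_\infty\le W$. Each entry of $x_j$ lies in $\{0,\dots,R-1\}$, so $\|x_j\|_\infty\le R-1$. By the definition of the induced $\infty$-norm as the maximum absolute row sum,
\[
\|Mx_j\|_\infty\le\|M\|_\infty\|x_j\|_\infty\le (R-1)\|M\|_\infty\le (R-1)W .
\]
The triangle inequality then gives
\[
\|b_j-Mx_j\|_\infty\le \|b_j\|_\infty+\|Mx_j\|_\infty\le W+(R-1)W=RW .
\]
By~\eqref{eq:b_update}, $b_{j+1}=(b_j-Mx_j)/R$, so $\|b_{j+1}\|_\infty\le W$. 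This closes the induction.

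The division in~\eqref{eq:b_update} is exact, as the paper already observed, so every $b_j$ is an integer vector. The bound therefore also shows that each entry of $b_j$ fits in $O(\log W)$ bits, which is how the lemma will be used in the running-time analysis.

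There is no real obstacle in this argument. The one point to handle carefully is that the bound on $x_j$ must be $R-1$ and not $R$. This is exactly what makes the factor $W+(R-1)W$ collapse to $RW$, so that dividing by $R$ returns us to $W$ rather than to a slightly larger bound that would accumulate over the iterations.
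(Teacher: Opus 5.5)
Your proof is correct and follows the paper's argument essentially verbatim. You induct from $b_0=b$, bound $\|x_j\|_\infty\le R-1$, and use $\|Mx_j\|_\infty\le\|M\|_\infty\|x_j\|_\infty$ with the triangle inequality to get $\|b_{j+1}\|_\infty\le \bigl(W+(R-1)W\bigr)/R=W$.
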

\begin{proof}
	We proceed by induction.
	The initial vector $b_0=b$ clearly obeys the desired bound.
	Now assume that $\|b_i\|_\infty \leq W$. The coordinates of $x_i$ are in $\{0,\dots,R-1\}$, so $\|x_i\|_{\infty}\leq R-1$, and therefore
	\[
	 	\|b_{i+1}\|_{\infty}=\left\|\frac{b_i - M x_i}{R}\right\|_{\infty} 
	 	\leq \frac{1}{R} \left(\|b_i\|_{\infty}+\|M\|_{\infty}\cdot\|x_i\|_{\infty} \right)
	 	\leq \frac{1}{R}\Big(W +(R-1)W \Big)=W,
	\]	
	so the desired result follows by induction.
\end{proof}

Suppose for simplicity that the entries of $M$ and $b$ are polynomially bounded $O(\log n)$-bit integers and that $R$ is a polynomially bounded prime.
Straightforward implementations of all three steps of Dixon's algorithm do not, in general, achieve the desired $\Ot(mn)$ bound:
\begin{itemize}
	\item \textbf{Step 1:} Inverting $M$ mod $R$ in Step 1 takes $\Ot(n^3)$ time using classical Gaussian elimination, or $\Ot(n^{\omega})$ using inversion algorithms based on fast matrix multiplication, where $\omega$ is an admissible matrix-multiplication exponent as above.
	\item \textbf{Step 2:} \cref{thm:reconstruction} gives a bound of $\Ot(n)$ for the number of iterations $L$.
	The entries of each $x_i$ and $b_i$ are polynomially bounded, and thus all of the arithmetic operations in Step 2 require only logarithmic precision.  
	
	However, even if $M$ is sparse, $C$ can be dense, so each iteration could require a quadratic number of arithmetic operations.  This gives a total running time of $\Ot(n^2 L)=\Ot(n^3)$ for Step~2.  
	\item \textbf{Step 3:} One can show that recovering each coordinate with the Euclidean algorithm modulo $R^L$ takes time $\Ot(L^2)=\Ot(n^2),$ so the total running time for Step 3 is also $\Ot(n^3)$. 	
\end{itemize}

Our algorithm will apply Dixon's algorithm in a setting where the first two steps can be performed more efficiently.
For Step 3, we will replace the classical Euclidean reconstruction algorithm with a more recent algorithm that only requires $\Ot(L)$ time  per coordinate, allowing us to implement Step 3 in time $\Ot(n^2)$.

\subsection{Faster Reconstruction}
In this section, we state the faster rational reconstruction results we'll use in our algorithm. 
\begin{theorem}[Fast rational reconstruction]\label{thm:scalar_reconstruction}
Let $Q,v,N,D$ be integers with \(Q\ge 2\), \(0\le v<Q\), and \(N,D\ge1\) such that
$$
    Q>2ND.
$$
There is at most one reduced fraction \(a/c\) satisfying
$$
    |a|\le N,\qquad 1\le c\le D,\qquad
    \gcd(c,Q)=1,\qquad
    a\equiv vc\pmod Q.
$$
If such a fraction exists, it can be recovered deterministically in
$$
    O(\mathsf M(s)\log s)
$$
bit operations, where
$$
    s=\left\lceil\log_2(Q+1)\right\rceil
$$
and \(\mathsf M(s)\) is the bit complexity of multiplying two \(s\)-bit integers.
In particular, $\mathsf M(s)=\Ot(s)$, and the assumption $Q>2ND$ implies that $N,D<Q$, so all reconstruction parameters have $O(s)$ bits;
hence the running time is
$$  
    \widetilde O(\log Q).
$$
\end{theorem}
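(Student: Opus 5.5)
The statement has two parts, uniqueness and fast recovery; I will handle them separately.

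\emph{Uniqueness.} Suppose $a/c$ and $a'/c'$ both satisfy the conditions. Then
\[
a \equiv vc \pmod Q \quad\text{and}\quad a' \equiv vc' \pmod Q,
\]
so $ac' \equiv vcc' \equiv a'c \pmod Q$. Since $|ac'-a'c|\le 2ND<Q$, the difference $ac'-a'c$ must be exactly $0$. Hence $a/c=a'/c'$, and because both are reduced with positive denominators, they are the same fraction. This is the same argument as the uniqueness part of \cref{thm:reconstruction}.

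\emph{Recovery.} I would use the classical characterization via the extended Euclidean algorithm applied to $(Q,v)$. Write the remainder sequence as $r_j = s_jQ + t_jv$, where the $|t_j|$ increase and the $r_j$ decrease. The standard lemma (Wang; von zur Gathen--Gerhard, Thm.~5.26) says the following. Suppose a solution $a/c$ exists with $|a|\le N$, $c\le D$ and $Q>2ND$. Let $j$ be the first index with $r_j\le N$. Then $a/c = \pm r_j/t_j$.

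\begin{enumerate}
\item First I would prove this lemma. From $a\equiv vc$ we get $a = cv - kQ$ for some integer $k$. Comparing this with $r_j=s_jQ+t_jv$, and using the standard bounds $|t_j|\le Q/r_{j-1}<Q/N$ together with the determinant identity $s_jt_{j+1}-s_{j+1}t_j=\pm1$, one shows that $(a,c)$ is a rational multiple of $(r_j,t_j)$. Coprimality of $a$ and $c$ then forces $(a,c)=\pm(r_j,t_j)$.
\item Second, I would locate index $j$ without running the quadratic-time Euclidean algorithm. The half-gcd algorithm of Knuth--Schönhage (as corrected by Thull--Yap and Möller) computes the cofactor matrix that carries $(Q,v)$ to the consecutive remainder pair straddling a prescribed size threshold. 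It runs in $O(\mathsf M(s)\log s)$ bit operations, and using it with threshold $N$ gives $r_j$ and $t_j$ directly.
\item Third, I would verify the candidate. Check that $1\le|t_j|\le D$, that $\gcd(r_j,t_j)=1$, and that $\gcd(t_j,Q)=1$, normalizing the sign so the denominator is positive. Each check is one fast gcd or multiplication, costing $O(\mathsf M(s)\log s)$. If any check fails, report that no fraction exists; this is justified by the lemma in the first step.
\end{enumerate}

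Finally, since $Q>2ND$, we have $N,D<Q$, so every parameter has $O(s)$ bits. Combined with $\mathsf M(s)=\Ot(s)$, the total cost is $\Ot(\log Q)$.

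I expect the main obstacle to be the second step: making sure that the half-gcd variant stops at exactly the right remainder index, and in particular handles the off-by-one boundary at $r_j\le N<r_{j-1}$ correctly. Rather than reprove this, I would cite the known analysis, e.g.\ Pan--Wang or von zur Gathen--Gerhard, Ch.~11, for the $O(\mathsf M(s)\log s)$ bound.
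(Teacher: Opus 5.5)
Your proposal is correct and follows essentially the same route as the paper. Uniqueness uses the identical cross-multiplication argument. Recovery uses the extended Euclidean algorithm on $(Q,v)$, stopped at the first remainder $r_j\le N$, with the checks $|t_j|\le D$ and $\gcd(t_j,Q)=1$; your extra check $\gcd(r_j,t_j)=1$ is automatic, since $\gcd(r_j,t_j)=\gcd(Q,t_j)$. The $O(\mathsf M(s)\log s)$ bound is then delegated, as in the paper, to fast half-gcd and Pan--Wang.

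The paper simply cites Monagan for the stopping-rule lemma, so your step 1 goes beyond it. If you do prove that lemma yourself, note that $|t_j|<Q/N$ alone only gives $|t_ja-cr_j|<3Q/2$, which does not force the difference to vanish. A clean finish is Legendre's criterion $|v/Q-k/c|=|a|/(cQ)<1/(2c^2)$, which makes $(a,c)$ an EEA row up to sign. Combine it with $|t_{j+1}|>Q/(2r_j)\ge Q/(2N)>D$ to pin that row to index $j$.
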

Uniqueness is straightforward: if $a/c$ and $a'/c'$ both satisfy these conditions, then $ac'\equiv vcc'\equiv a'c\pmod Q$ and $|ac'-a'c|\le 2ND<Q$, so $ac'=a'c$. 
Since both fractions are reduced with positive denominators, they are equal.
We use the classical bounded rational reconstruction theorem of Wang, Guy, and Davenport~\cite{WangGuyDavenport1982}; Collins and Encarnaci\'on~\cite{CollinsEncarnacion1995} later corrected an error in Wang's treatment and gave an efficient implementation. 
Both they and Dixon gave reconstruction algorithms based on the extended Euclidean 
algorithm. 
Concretely, one runs the extended Euclidean algorithm on $Q$ and $v$, stops at the first remainder $r_i\le N$, whose cofactor $t_i$ satisfies $r_i\equiv t_iv\pmod Q$, and returns $r_i/t_i$ (with the sign normalized so that the denominator is positive) provided $|t_i|\le D$ and $\gcd(t_i,Q)=1$; when $Q>2ND$, this returns the fraction whenever it exists.
See Monagan~\cite[Theorem~1 and Algorithm~RR, pp.~244--245]{Monagan2004} for the precise bounded-reconstruction criterion and extended-Euclidean stopping rule used above.
The stated bit bound follows by implementing this partial extended-Euclidean computation with fast arithmetic; see Wang and Pan~\cite{WangPan2003}.
Pan and Wang~\cite[p.~502]{PanWang2004} state the $O(\mathsf M(s)\log s)$ bound explicitly and explain an alternative derivation using fast Euclidean computation and a product tree.
The modulus $Q$ need not be prime; the required coprimality condition is $\gcd(c,Q)=1$.

\begin{corollary}\label{cor:vector_reconstruction}
Under the hypotheses of Theorem~\ref{thm:reconstruction}, represent each $v_i$ in $\{0,\ldots,R^L-1\}$. There is a function
\textnormal{\textsc{Recover}}$(v,R,L,H)$ that recovers the exact solution
$$
    \xsol=M^{-1}b
$$
by applying scalar rational reconstruction to each coordinate, in
$$
    O\!\left(
        n\,\mathsf M(L\log R)\log(L\log R)
      \right)
    =
    \widetilde O(nL\log R)
$$
bit operations.  In particular, if \(L=O(n)\), the reconstruction cost is
$$
    \widetilde O(n^2\log R).
$$
\end{corollary}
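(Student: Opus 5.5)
The plan is to apply \cref{thm:scalar_reconstruction} coordinatewise with modulus $Q=R^L$, residue $v=v_i$, and bounds $N=D=H$, and then use \cref{thm:reconstruction} to identify the output as $\xsol_i$.

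\textbf{Hypotheses.} By~\eqref{eq:L_bound} we have $Q=R^L>2H^2=2ND$, so the inequality required by \cref{thm:scalar_reconstruction} holds. Since each $v_i$ is represented in $\{0,\dots,R^L-1\}$, the condition $0\le v<Q$ also holds.

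\textbf{The true solution is a valid output.} Write $\xsol_i=f_i/g_i$ in lowest terms with $g_i>0$. The proof of \cref{thm:reconstruction} gives $|f_i|\le H$, $1\le g_i\le H$, and $\gcd(g_i,R^L)=1$, because $g_i\mid\det M$. It also gives $\xsol\equiv v\pmod{R^L}$. Unwinding the definition of congruence for a rational, this means $f_i\equiv v_i g_i\pmod{R^L}$. So $f_i/g_i$ is a reduced fraction satisfying every condition of \cref{thm:scalar_reconstruction}.

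\textbf{Correctness.} \cref{thm:scalar_reconstruction} says at most one fraction meets those conditions, and that the algorithm recovers it whenever it exists. Hence the procedure returns exactly $\xsol_i$. Doing this for every coordinate gives $\textsc{Recover}(v,R,L,H)=\xsol$.

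\textbf{Cost.} Set $s=\lceil\log_2(R^L+1)\rceil$, so $s=O(L\log R)$. Each coordinate costs $O(\mathsf M(s)\log s)=O(\mathsf M(L\log R)\log(L\log R))$ bit operations, using that $\mathsf M$ is monotone and grows at most polynomially, so constant-factor changes in the argument cost only constant factors. The condition $Q>2ND$ forces $N,D<Q$, so all the inputs $Q$, $v_i$, and $H$ have $O(s)$ bits. Preparing them costs only $\Ot(s)$ per coordinate: $Q=R^L$ can be computed once by repeated squaring in $\Ot(L\log R)$ time, or read off as a shift when $R=2^s$. Summing over the $n$ coordinates gives $O(n\,\mathsf M(L\log R)\log(L\log R))=\Ot(nL\log R)$, since $\mathsf M(s)=\Ot(s)$. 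Substituting $L=O(n)$ gives $\Ot(n^2\log R)$.

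The argument is essentially bookkeeping. The one step that needs care is checking that the bounds $|f_i|\le H$ and $g_i\le H$ from the Hadamard/Cramer argument, together with coprimality of the denominator, give exactly the hypotheses of the scalar theorem. That is what lets its uniqueness guarantee pin down $\xsol_i$ rather than some other fraction with the same residue.
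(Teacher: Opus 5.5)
Your proposal is correct and follows the paper's proof: apply Theorem~\ref{thm:scalar_reconstruction} to each coordinate with $Q=R^L$ and $N=D=H$, use Theorem~\ref{thm:reconstruction} to certify that $\xsol_i$ satisfies its hypotheses (height bounds, coprime denominator, $a_i\equiv v_i d_i \pmod{R^L}$, and $R^L>2H^2$), and sum the per-coordinate cost using $\log(R^L)=L\log R$. Your extra bookkeeping is sound: checking $0\le v_i<Q$, using monotonicity of $\mathsf M$, and accounting for the one-time cost of forming $R^L$. One small point: reusing $s$ for $\lceil\log_2(R^L+1)\rceil$ clashes with the exponent in $R=2^s$.
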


\begin{proof}
By Theorem~\ref{thm:reconstruction}, each coordinate
\(\xsol_i=a_i/d_i\) satisfies
$$
    |a_i|,d_i\le H,\qquad
    a_i\equiv v_i d_i\pmod{R^L},
$$
and \(\gcd(d_i,R^L)=1\).  Since \(R^L>2H^2\),
Theorem~\ref{thm:scalar_reconstruction} applies to each coordinate with
\(N=D=H\).  As \(\log(R^L)=L\log R\), reconstructing all \(n\) coordinates
gives the stated bound.
\end{proof}

\section{Approximately Solving Systems with Integer Entries}\label{sec:integer_solver}
In this section, we give a solver for inputs whose entries are integers. 
We show how to use this to solve floating-point inputs in~\cref{sec:floating_point}.
   
\begin{theorem}[Solver for integer inputs] \label{thm:integer_solver}
Let $A\in \Z^{n \times n}$ be an invertible matrix with $m$ nonzero entries and condition number $\kappa$, let $b\in \Z^n$ be a vector, and let 
$T_A  = \max\left(1,\max_{i,j} |A_{ij}|, \max_{i}|b_i|\right)$.
Assume that $T_A < 2^{B-1}$ and that the entries are given in a signed binary representation with at most $B$ bits per entry.
Given an integer $E\ge0$, put $\epsilon=2^{-E}$. There is a deterministic algorithm that finds a vector $x\in \Q^n$ such that 
\[
\|Ax-b\|_2 \leq \epsilon \|b\|_2
\]
using at most $\Ot(m n \tau)$ bit operations and $\Ot(n^2\tau)$ bits of space, where
$\tau = \max(B, \log \kappa, \log 1/\epsilon)$,
and the output $x$ is a rational vector whose coordinates have numerators and denominators of $\Ot(n\tau)$ bits.
\end{theorem}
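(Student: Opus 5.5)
We apply Dixon's algorithm to the scaled perturbed system $Mx=Rb$ with $M=RA+I$, where $R=2^s$ is a power of two chosen large enough that the exact solution of the perturbed system is an $\epsilon$-accurate solution of $Ax=b$. Since $M\equiv I\pmod R$, we have $\det M\equiv 1\pmod R$, so $\gcd(R,\det M)=1$. Moreover $C=I$ serves as the inverse modulo $R$, so Step~1 is free. We only need a bound on $\kappa$ to choose $s$; to remove this, we double a guess for $\log\kappa$ and check the residual of the output.

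\textbf{Accuracy.} Let $x'$ solve $(A+I/R)x'=b$, which is the same as $Mx'=Rb$. Then $Ax'-b=-x'/R$. If $\|A^{-1}\|_2/R\le 1/2$, then $A+I/R$ is invertible and
\[
\|x'\|_2\le \frac{\|A^{-1}\|_2\|b\|_2}{1-\|A^{-1}\|_2/R}\le 2\|A^{-1}\|_2\|b\|_2 .
\]
Hence $\|Ax'-b\|_2\le 2\|A^{-1}\|_2\|b\|_2/R$, and it suffices to take $R\ge 2\|A^{-1}\|_2/\epsilon$. Because $A$ is a nonzero integer matrix, $\|A\|_2\ge1$, so $\|A^{-1}\|_2\le\kappa$. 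Therefore
\[
s=\lceil\log\kappa\rceil+E+1=O(\tau)
\]
works.

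\textbf{Cost of lifting.} The entries of $M$ and of the right-hand side $Rb$ are bounded by $T:=RT_A+1$, which has $O(\tau)$ bits. By \cref{lemma:residual_bound}, every $b_i$ satisfies $\|b_i\|_\infty\le W\le \max(RT_A,\,R\,m T_A+1)$, so each residual has $O(\tau+\log n)$ bits. Each step computes $x_i=b_i\bmod R$ by taking the low $s$ bits, then $b_{i+1}=(b_i-x_i-RAx_i)/R$. This is one sparse product $Ax_i$ with $s$-bit vectors and $O(B)$-bit matrix entries, followed by exact division by $2^s$, which is a shift. One step therefore costs $\Ot(m\tau)$ bit operations.

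\textbf{Number of steps and reconstruction.} Set $H=\lceil n^{n/2}T^n\rceil$, which has $O(n(\log n+\tau))$ bits. \cref{thm:reconstruction} requires $R^L>2H^2$, so $L=O(n(\tau+\log n)/s)$. With $s=\Theta(\tau)$ this gives $L=\Ot(n)$. Lifting then costs $\Ot(mn\tau)$ in total. Reconstruction via \cref{cor:vector_reconstruction} costs $\Ot(nLs)=\Ot(n^2\tau)$, which is at most $\Ot(mn\tau)$ because invertibility forces $m\ge n$. Storage consists of the $L$ digit vectors $x_i$, totaling $nLs=\Ot(n^2\tau)$ bits, plus $y_L$ of the same size. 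The output coordinates have numerators and denominators bounded by $H$, that is, $\Ot(n\tau)$ bits.

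\textbf{Unknown $\kappa$ (main obstacle).} The algorithm must not require $\kappa$, and this is the delicate step. We try $s_j=E+1+2^j$ for $j=0,1,\dots$. For each guess we run the procedure above, obtaining the exact solution $x^{(j)}$ of the $j$-th perturbed system. We then certify it by checking $\|Ax^{(j)}-b\|_2\le\epsilon\|b\|_2$.

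This check must be cheap despite the $\Ot(n\tau)$-bit rationals. We use the identity $Ax'-b=-x'/R$, which holds exactly for the perturbed solution. The test therefore reduces to $\|x'\|_2\le \epsilon R\|b\|_2$, which compares integer quantities of size $\Ot(n\tau)$ bits. Each coordinate can be squared in $\Ot(n\tau)$, so the test costs $\Ot(n^2\tau)$ in total.

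The test succeeds once $2^j\ge\log\kappa$, by the accuracy bound. Because the guesses double, the total cost is a geometric sum dominated by the last round, giving $\Ot(mn\tau)$ overall.

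One subtlety remains: for small $j$, $M$ might come close to singular. It is nevertheless always invertible, because $\det M\equiv1\pmod R$ makes $\det M\neq0$. So \cref{thm:reconstruction} still applies, and each round terminates with a correct exact perturbed solution regardless of whether the certificate passes.
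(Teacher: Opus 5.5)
Your overall plan matches the paper's: perturb to $RA+I$ so that $C=I$, use $Ax'-b=-x'/R$ as an exact certificate, and search over $s$ by doubling. The accuracy argument is fine. The running-time step, however, fails for the $s$ you choose.

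You set $s=\lceil\log\kappa\rceil+E+1$ and then assert $s=\Theta(\tau)$. But this $s$ does not depend on the entry size at all, whereas $\tau\ge B$. By your own estimates, $\log_2 H=\Theta(n(s+\log_2 T_A+\log n))$, so you need $L\approx 2\log_2 H/s$ lifting steps. Each step still multiplies all $m$ entries of $A$, which have up to $B$ bits. The lifting cost is therefore on the order of $mn(s+B+\log n)^2/s$, which is $\Ot(mn\tau)$ only if $s$ is at least about $B$.

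Concretely, $A=2^kI$ has $\kappa=1$, so with $E=1$ you get $s=2$ for every $k$, while $B\approx k$. Then $L=\Theta(nk)$ steps, each costing $\Theta(nk)$, give $\Theta(n^2k^2)=\Theta(mn\tau^2)$.

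The unknown-$\kappa$ schedule $s_j=E+1+2^j$ has the same defect. Every round with $s_j<B$ costs about $mnB^2/s_j$. The total is then dominated by the first round rather than the last, so the geometric-sum argument fails.

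The missing ingredient is the paper's second lower bound on the modulus, $R\ge 2nT_A$ (i.e.\ $s\ge\lceil\log_2(2nT_A)\rceil$), imposed together with $R\ge 2\kappa/\epsilon$. Enlarging $R$ only improves accuracy. This bound gives $T\le R^2$, so $H=R^{3n}$ is admissible, $L=6n+2$ is independent of the data, and every quantity in the recurrence has $O(s)$ bits (\cref{thm:solve_perturbed}). Correspondingly, the search should start at $s_0=\max\{\lceil\log_2(2nT_A)\rceil,E\}$ and double $s$ itself.

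A smaller point concerns the certificate. The reconstructed $x'_i$ have different reduced denominators, so ``square each coordinate'' does not by itself give an exact comparison in $\Ot(n^2\tau)$ bit operations. You should either note that every reduced denominator divides $\det M$, so their lcm has at most $\log_2 H$ bits, or sum the squares in a balanced tree as the paper does.
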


To get the claimed $\Ot(m n \tau)$ algorithm, we solve a related linear system for which we can give a very fast implementation of Dixon's algorithm.
Let $R=2^s$ for a sufficiently large $s$ to be chosen later, and let
\[M=R A + I.\]
  We'll solve the system $M z =b$ and then return $\widehat x = R z$.  Note that
$$M z = (R A + I) z   =(A + I/R ) \widehat x =b,$$
so $$A \widehat x - b = -\widehat x /R.$$
If we choose a sufficiently large $R$, the residual will be small, and thus $\widehat x$ will give the required approximate solution.

The key point is that $M \equiv I \pmod{R}$. 
Since $M/R=A+I/R$, the scaled working matrix differs from $A$ by only $I/R$, while its modular structure lets us implement Dixon's algorithm much more efficiently.
In particular, the inverse of $M$ modulo $R$ is $I$, so no modular-inverse computation or dense modular-inverse application is needed.
This makes the first step of Dixon's algorithm trivial, and it removes the dense matrices from Step 2 so that all matrix-vector multiplications involve matrices with $O(m)$ nonzero entries. 
Viewed differently, since $RA\equiv 0\pmod R$, the series $z=(I+RA)^{-1}b=\sum_{k\ge0}(-R)^kA^kb$ converges $R$-adically, and the lifting below computes the base-$R$ digits of its $R$-adic expansion using only products with $A$, while, for the choice of $R$ below, Lemma~\ref{lemma:residual_bound} keeps intermediate quantities in the iterations to $O(\log R)$ bits.

In Section~\ref{sec:solve_perturbed}, we'll specialize Dixon's algorithm to systems of the form above and analyze its running time.
In Section~\ref{sec:perturbation_error}, we'll prove~\cref{thm:integer_solver} by bounding the error to show that the solution to the modified system yields our desired approximate solution to the original linear system.

\begin{remark}
The choice of $R$ to be a power of two, rather than a prime as in Dixon's original paper, is not essential for the recurrence or stated overall running time, but it helps in two ways:
\begin{enumerate}
\item It simplifies the implementation, as assembling $y_L=\sum_{i=0}^{L-1} R^i x_i$ in binary just involves concatenating the stored digits coordinatewise.
For a general integer $R\ge2$, the same assembly can instead
be performed by divide-and-conquer base conversion in
$\Ot(nL\log R)$ bit operations; see
Brent and Zimmermann~\cite[Section~1.7.2, Algorithm~1.25]{BrentZimmermann2010}.
In addition, the modular reduction and division steps become bitwise operations that are likely to be faster in practice. 
\item It makes modulus selection immediate: for an integer threshold $X\ge2$, take $R=2^{\lceil\log_2 X\rceil}$, so $X\le R<2X$. Requiring a prime would introduce a separate prime-selection step, and finding a prime of appropriate size deterministically is nontrivial.
Our construction avoids that step. 
\end{enumerate}

\end{remark}

\subsection{Dixon's Algorithm for \texorpdfstring{$\bm{M\equiv I \pmod{R}}$}{M congruent to I modulo R}}\label{sec:solve_perturbed}

For this subsection, let $A\in\Z^{n\times n}$ have $m$ nonzero entries, let $b\in\Z^n$, and let
\[
    T_A=\max\{1,\max_{i,j}|A_{ij}|,\max_i|b_i|\}.
\]
Let $M=RA+I$ with $R=2^s\geq2nT_A$ and $s\ge1$. Since $\det(M)\equiv1\pmod R$, $M$ is nonsingular and $\gcd(R,\det(M))=1$, so $M$ and $R$ meet the requirements of~\cref{thm:reconstruction}.
Let 
\[
	T_M  = \max\left(\max_{i,j} |M_{ij}|, \max_{i}|b_i|\right)\leq RT_A+1.
\]
The reconstruction guarantees in the theorem require $R^L > 2 H^2$ for an integer height bound
$H \ge n^{n/2} T_M^n$.

The assumption $R\geq 2n T_A$ implies that $n \leq R$ and 
\[
	T_M \leq R T_A+1 \leq \frac{R^2}{2n}+1 \leq R^2,
\]
so
\[
	n^{n/2} T_M^n
	\leq 
	R^{n/2} R^{2n}
	=
	R^{5n/2}
	\leq 
	R^{3n},
\]
and thus $H=R^{3n}$ is an admissible height bound, with $2H^2= 2R^{6n}$.
If we choose $L=6n+2$, we'll then have
\[
R^L = R^{6n+2}\geq 4R^{6n} > 2H^2,	
\]
as required. We use this height bound $H=R^{3n}$ in the reconstruction routine.

For $M=RA+I$, the recurrence in Dixon's algorithm simplifies significantly.  
Since $M^{-1}\equiv I \pmod{R}$, Equation~\eqref{eq:x_update} becomes 
\[
x_i := b_i \bmod R,
\]
and plugging $M=RA+I$ into~\eqref{eq:b_update} gives
\[
	b_{i+1}:=\frac{b_i -x_i}{R}-Ax_i.
\]
This gives us the following algorithm for exactly solving $Mz=b$: 
\begin{algorithm}[ht]
\caption{\textsc{SolvePerturbed}$(A,b,R)$}
\label{alg:identity_lift}
\LinesNumbered
\KwIn{$A\in\Z^{n\times n}$, $b \in\Z^n$, $T_A=\max\{1,\max_{i,j}|A_{ij}|,\max_i|b_i|\}$, and $R=2^s\ge2nT_A$ for an integer $s\ge1$}
\KwOut{The exact rational vector $z=(I+RA)^{-1}b$}

$L\gets6n+2$; $H\gets R^{3n}$; 
$b_0\gets b$\;
\For{$i=0,\ldots,L-1$}{
  $x_i\gets b_i\bmod R$ in $\{0,\ldots,R-1\}^n$\;\label{line:digit}
  $b_{i+1}\gets(b_i-x_i)/R-Ax_i$\;\label{line:residual}
}

Assemble $y_L=\sum_{i=0}^{L-1}R^i x_i$ by concatenating the stored digits\;\label{line:assemble}
$z \gets$ \textsc{Recover}$(y_L,R,L,H)$\tcp*{See Corollary~\ref{cor:vector_reconstruction}}\label{line:recover} 
\Return $z$\;
\end{algorithm}
\begin{theorem}\label{thm:solve_perturbed}
Let $A\in\Z^{n\times n}$ have $m$ nonzero entries, let $b\in\Z^n$, and let
\[
    T_A=\max\{1,\max_{i,j}|A_{ij}|,\max_i|b_i|\}.
\]
For $R=2^s\ge2nT_A$, with integer $s\ge1$,
\textnormal{\textsc{SolvePerturbed}}$(A,b,R)$ returns $(I+RA)^{-1}b$ and can be implemented to use $\Ot((mn+n^2)s)$ bit operations and  $\Ot((m+n^2)s)$ bits of storage.
Each reduced output numerator and denominator has $O(ns)$ bits. No nonsingularity assumption on $A$ is required.
\end{theorem}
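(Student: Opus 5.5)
Correctness comes first, by reducing to the general Dixon analysis. Since $M=RA+I\equiv I\pmod R$, we have $\det M\equiv 1\pmod R$. Hence $M$ is nonsingular and $\gcd(R,\det M)=1$, with no assumption on $A$. The identity matrix is a valid modular inverse $C$ in Step~1. With $C=I$, the updates \eqref{eq:x_update}--\eqref{eq:b_update} reduce exactly to lines~\ref{line:digit}--\ref{line:residual}, so the telescoping argument gives $My_L\equiv b\pmod{R^L}$. The preceding computation shows that $H=R^{3n}\ge n^{n/2}T_M^n$ and $R^L=R^{6n+2}>2H^2$. Therefore \cref{thm:reconstruction} applies with $T=T_M$, and \cref{cor:vector_reconstruction} shows that \textsc{Recover} returns $M^{-1}b=(I+RA)^{-1}b$. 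The output size bound follows from Cramer's rule and Hadamard's inequality, as in the proof of \cref{thm:reconstruction}. Reduced numerators and denominators are at most $H=R^{3n}$ in absolute value, so they have $O(ns)$ bits.

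For the cost, the key step is bounding the sizes of the intermediate integers using \cref{lemma:residual_bound}. The lemma gives $\|b_i\|_\infty\le W=\max(\|b\|_\infty,\|M\|_\infty)$. Here $\|M\|_\infty\le R\|A\|_\infty+1\le RnT_A+1\le R^2/2+1$, and $\|b\|_\infty\le T_A\le R$, so $W\le R^2$. Every $b_i$ therefore has $O(s)$-bit entries, and every digit $x_i$ has $s$ bits. Each iteration computes $b_i\bmod R$ by taking the low $s$ bits and $(b_i-x_i)/R$ by a shift, in $O(ns)$ bit operations. It also computes $Ax_i$ as a sparse product: $m$ products of an entry of $A$ (at most $\log T_A\le s$ bits) with an $s$-bit digit, followed by row sums of $O(s+\log n)=O(s)$-bit numbers. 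This costs $\Ot(ms)$ bit operations. Over $L=6n+2$ iterations, the lifting phase costs $\Ot((m+n)ns)$.

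The remaining two steps are a direct tally. Assembling $y_L$ by concatenation costs $O(nLs)=O(n^2s)$. Reconstruction costs $\Ot(nL\log R)=\Ot(n^2s)$ by \cref{cor:vector_reconstruction}. The total is $\Ot((mn+n^2)s)$. For storage, $A$ takes $O(ms)$ bits (with indices, $O(m\log n)$ more, absorbed in $\Ot$). The current $b_i$ takes $O(ns)$ bits. All digits $x_0,\dots,x_{L-1}$, equivalently $y_L$, take $O(n^2s)$ bits. Reconstruction works one coordinate at a time on $O(ns)$-bit numbers, and the output takes $O(n^2s)$ bits. This gives $\Ot((m+n^2)s)$.

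The only step needing real care is the size bound on the $b_i$. Without it, the entries of $Ax_i$ could in principle grow across iterations and destroy the per-step $\Ot(ms)$ cost. The hypothesis $R\ge 2nT_A$ is exactly what makes $\|M\|_\infty$ and hence $W$ only $O(s)$ bits, so I would check that inequality chain explicitly. A minor point to state is that $\log T_A\le s$, so the entries of $A$ themselves fit in $O(s)$ bits.
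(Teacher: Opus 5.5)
Your proposal is correct and follows essentially the same route as the paper's proof. You reduce to Dixon's algorithm with $C=I$ via $\det M\equiv 1 \pmod R$, apply the reconstruction theorem with $H=R^{3n}$ and $L=6n+2$, and use the residual lemma (with $W\le R^2$ from $R\ge 2nT_A$) to keep all lifting quantities at $O(s)$ bits, giving $\Ot((m+n)s)$ per iteration plus $\Ot(n^2s)$ for assembly and reconstruction. The only cosmetic difference is that you store just the current $b_i$ rather than all of them, which leaves the $\Ot((m+n^2)s)$ storage bound unchanged.
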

\begin{proof}
	\textsc{SolvePerturbed} is Dixon's algorithm for $M=I+RA$ with $C=I$. Since $MI=M\equiv I\pmod R$ and $R\ge2$, the matrix $C=I$ meets the requirement of Step~1, and with this choice~\eqref{eq:x_update} and~\eqref{eq:b_update} become Lines~\ref{line:digit} and~\ref{line:residual}. The argument of Step~2 therefore gives $My_L\equiv b\pmod{R^L}$, and the entries of $y_L$ lie in $\{0,\ldots,R^L-1\}$. The discussion above shows that $\gcd(R,\det M)=1$, that the entries of $M$ and $b$ are bounded in absolute value by $T_M$, that $H=R^{3n}\ge n^{n/2}T_M^n$, and that $R^L>2H^2$. Hence~\cref{thm:reconstruction} applies with $T=T_M$, $v=y_L$, and this $H$, and by Corollary~\ref{cor:vector_reconstruction}, Line~\ref{line:recover} returns $M^{-1}b=(I+RA)^{-1}b$.
	
	Each iteration of the recurrence involves $O(n)$ scalar operations and matrix-vector multiplication with the matrix $A$, which has $m$ nonzero entries, so there are $O(m+n)$ arithmetic operations per iteration.
	The absolute values of the entries of $A$, $b$, and $x_i$ are at most $R$ by assumption, and the entries of $b_i$ have absolute value at most $\max\left(\|b\|_{\infty},\|I+RA\|_{\infty} \right)=O(R^2)$ by Lemma~\ref{lemma:residual_bound}, so they all fit in $O(s)$ bits.
	Every partial row sum in $Ax_i$ has absolute value at most $(R-1)nT_A\le R^2/2$, so these intermediate values also fit in $O(s)$ bits.
		Each arithmetic operation can thus be implemented using $\Ot(s)$ bit operations, 
		so we need $\Ot((m+n)s)$ bit operations per iteration.
		There are $L=O(n)$ iterations, so the total number of bit operations for the iterative
		portion of the algorithm is $\Ot((mn+n^2)s)$.

	Assembling $y_L$ once costs $O(nLs)=O(n^2s)$ bit operations.
	\textsc{Recover} requires $\Ot(n^2 \log R)=\Ot(n^2 s)$ bit operations by Corollary~\ref{cor:vector_reconstruction}, so the total number of bit operations is $\Ot(mns+n^2 s)=\Ot(( mn+n^2) s)$, as claimed.
	
	Each vector in the recurrence requires $O(ns)$ bits, so storing all $O(n)$ of them uses $O(n^2 s)$ bits. Each coordinate output by the reconstruction algorithm is a ratio of integers with $O(ns)$ bits, so the full output is $O(n^2s)$ bits.  
	The reconstruction algorithm computes each coordinate separately and uses at most $\Ot(ns)$ bits of space, bounded by its running time.
	Adding these to the $O((m+n)s)$ bits required to store $A$, $b$, and their indices gives a total space bound of $\Ot((m+n^2)s)$, as claimed.
\end{proof}

\subsection{Bounding the Approximation Error}\label{sec:perturbation_error}
To prove~\cref{thm:integer_solver}, we need to bound the error introduced by approximating the solution to $Ax=b$ by $Rz$ for $z=(I+RA)^{-1}b$. 
The integer solver returns rational coordinates. Lemma~\ref{lem:rounding_certificate} controls input and output rounding, and Section~\ref{sec:floating_point} obtains the short significands and common encoded output scale promised by Theorem~\ref{thm:main}.

\begin{proof}[Proof of~\cref{thm:integer_solver}]
If $b=0$ or $E=0$, return $0$. Otherwise $E\ge1$ and $0<\epsilon=2^{-E}\le1/2$.
First suppose that $\kappa$ is known, let $R=2^s$ be the smallest power of 2 greater than $\max\left(\,2nT_A,\frac{2\kappa}{\epsilon}\right)$,
and return
\[
\widehat x = R \cdot \textsc{SolvePerturbed}(A,b,R) = R(I+RA)^{-1}b
=\left(A+\frac{I}{R}\right)^{-1}b
\]
as the approximate solution.

We have $s=O(B+\log(n+1)+\log\kappa+\log(1/\epsilon))=\Ot(\tau)$, where the $\log(n+1)$ term is absorbed by $\Ot$ because $\tau\ge B\ge2$. Theorem~\ref{thm:solve_perturbed} therefore gives the claimed work and storage bounds, and scaling its output by $R$ takes $O(n^2s)$ bit operations. It remains to show that
	$\|A\widehat x-b\|_2 \leq \epsilon \|b\|_2$.
We have
\[
	\left(A+\frac{I}{R}\right) \widehat x= A\widehat x + \frac{\widehat x}{R}= b,
\]
so
\begin{equation}\label{eq:residual}
	A \widehat x -b=-\frac{\widehat x}{R}=-(I+RA)^{-1}b. 
\end{equation}
	
We can write the condition number of $A$ as the ratio of its largest and smallest singular values, $\kappa=\sigma_{\max}(A)/\sigma_{\min}(A)$.
$A$ is a nonzero matrix with integer entries, so  $\sigma_{\max}(A)\geq 1$, and therefore $\sigma_{\min}(A)	 \geq 1/\kappa$.
It follows that, for any vector $w$, we have 
\[
	\|(I+RA)w\|_2 \geq R\|Aw\|_2-\|w\|_2\geq (R\,\sigma_{\min}(A)-1)\|w\|_2
	\geq (R/\kappa-1)\|w\|_2.
\]
Applying $(I+RA)$ to both sides of \cref{eq:residual} gives $-b=(I+RA)(A\widehat x-b)$, and hence
\begin{equation}\label{eq:mult_error}
	\|b\|_2\geq (R/\kappa-1)\left\|A \widehat x -b\right\|_2.  
\end{equation}
We chose $R\geq 2\kappa/\epsilon$, so 
\[
	\frac{R}{\kappa}-1 \geq \frac{2}{\epsilon}-1=\frac{2-\epsilon}{\epsilon}\geq \frac{1}{\epsilon}.
\]
Plugging this into~\cref{eq:mult_error} and rearranging gives
\[
	\left\|A \widehat x -b\right\|_2\leq \epsilon\|b\|_2,
\]
as desired.

To remove the assumption that $\kappa$ is known, start with $s_0=\max\{\lceil\log_2(2nT_A)\rceil,E\}$ and try $s=2^j s_0$ for $j=0,1,\ldots$.
At each stage, compute $z=\textsc{SolvePerturbed}(A,b,2^s)$ and return $\widehat x=2^s z$ only if
\[
    \sum_i z_i^2\le\epsilon^2\sum_i b_i^2.
\]
This test certifies the original residual exactly because $A\widehat x-b=-z$.
The test fits in $\Ot(n^2s)$ bit operations: square the $n$ fractions of $O(ns)$ bits and sum them using a balanced binary tree. At a level combining $2^t$ terms per node, each numerator and denominator has $O(2^t ns)$ bits, so the total work per level is $\Ot(n^2s)$. Exact comparison with $2^{-2E}\sum_i b_i^2$ has the same bound since $s\ge E$.
Every stage is defined since $I+2^s A$ is invertible, and the first stage with $2^s\ge2\kappa/\epsilon$ must pass. Its $s$ is $O(B+\log(n+1)+\log\kappa+\log(1/\epsilon))$.
Since the tested bit lengths double in each stage, their sum is less than twice the last, and the number of stages is logarithmic in this last bit length. Thus the total work and storage remain $\Ot(mn\tau)$ and $\Ot(n^2\tau)$, respectively, and each returned rational coordinate has $\Ot(n\tau)$ bits.
\end{proof}

\subsection{Input and Output Rounding}\label{sec:rounding_certificate}
The exact integer solver also gives a useful certificate for a system whose entries have been rounded that we will use in~\cref{sec:floating_point}.
The following observation separates this error estimate from the input representation.
\begin{lemma}[Rounding certificate]\label{lem:rounding_certificate}
Let $A_0\in\R^{n\times n}$ and $b_0\in\R^n$ satisfy
\[
    \tfrac12\le\max_{i,j}|(A_0)_{ij}|\le1,
    \qquad \tfrac12\le\|b_0\|_\infty\le1.
\]
For an integer $h\ge1$, put $\ell=\lceil\log_2n\rceil$, $S=2^h$, and
\[
    C=\operatorname{round}(SA_0),\qquad
    d=\operatorname{round}(Sb_0),\qquad
    R=2^{h+\ell+1},\qquad y=R(I+RC)^{-1}d.
\]
Round entries to nearest with a fixed tie rule, preserving zero entries.
For a power of two $P\ge1$, let $\widehat y=P^{-1}\operatorname{round}(Py)$. Then
\begin{equation}\label{eq:rounding_certificate_bound}
    \frac{\|A_0\widehat y-b_0\|_2}{\|b_0\|_2}
    \le \frac{2n^2(\|y\|_\infty+1)}{S}+\frac{n^2}{P}.
\end{equation}
In particular, for $0<\epsilon\le1$ and $P\ge4n^2/\epsilon$, the test
\begin{equation}\label{eq:rounding_certificate_test}
    8n^2(\|y\|_\infty+1)\le\epsilon S
\end{equation}
certifies relative residual at most $\epsilon/2$.
If $A_0$ is nonsingular, the test holds for $S\ge64\kappa_2(A_0)n^3/\epsilon$.
\end{lemma}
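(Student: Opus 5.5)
The plan is a direct perturbation computation in three layers: the error from rounding the inputs, the error from the perturbation $I/R$, and the error from rounding the output. Write $C=SA_0+\Delta$ and $d=Sb_0+\delta$. Rounding to nearest gives $|\Delta_{ij}|\le\tfrac12$ and $|\delta_i|\le\tfrac12$, and $\Delta$ vanishes wherever $A_0$ does. By definition $y=R(I+RC)^{-1}d$ satisfies $(C+I/R)y=d$. Exactly as in~\eqref{eq:residual}, this gives $Cy-d=-y/R$. Substituting $C$ and $d$ and dividing by $S$ yields the identity
\[
    A_0y-b_0=\frac{1}{S}\Bigl(-\Delta y+\delta-\frac{y}{R}\Bigr).
\]
I would bound each term crudely by $\|y\|_\infty$:
\begin{itemize}
\item $\|\Delta y\|_2\le\|\Delta\|_F\|y\|_2\le\tfrac n2\cdot\sqrt n\|y\|_\infty$;
\item $\|\delta\|_2\le\sqrt n/2$;
\item $\|y/R\|_2\le\sqrt n\|y\|_\infty/R$, which is tiny since $R=2^{h+\ell+1}\ge2nS$.
\end{itemize}
Since $\|b_0\|_2\ge\|b_0\|_\infty\ge\tfrac12$, dividing by $\|b_0\|_2$ costs a factor of at most $2$. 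The total is at most $2n^2(\|y\|_\infty+1)/S$, with room to spare.

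For the output rounding, put $e=\widehat y-y$, so $\|e\|_\infty\le 1/(2P)$. Then $\|A_0e\|_2\le\|A_0\|_F\|e\|_2\le n\cdot\sqrt n/(2P)$, because every entry of $A_0$ has magnitude at most $1$. Dividing by $\|b_0\|_2\ge\tfrac12$ gives at most $n^{3/2}/P\le n^2/P$. The triangle inequality on $A_0\widehat y-b_0=(A_0y-b_0)+A_0e$ then yields~\eqref{eq:rounding_certificate_bound}. The certificate follows immediately. The test~\eqref{eq:rounding_certificate_test} makes the first term at most $\epsilon/4$, and $P\ge4n^2/\epsilon$ makes the second term at most $\epsilon/4$.

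For the final claim, I bound $\|y\|_2$ through the smallest singular value of $C+I/R=S A_0+\Delta+I/R$. Since the largest entry of $A_0$ is at least $\tfrac12$, we have $\|A_0\|_2\ge\tfrac12$, hence $\sigma_{\min}(A_0)\ge 1/(2\kappa)$ with $\kappa=\kappa_2(A_0)$. Weyl's inequality, together with $\|\Delta\|_2\le\|\Delta\|_F\le n/2$ and $1/R\le1$, gives
\[
    \sigma_{\min}(C+I/R)\ge \frac{S}{2\kappa}-\frac n2-1\ge\frac{S}{4\kappa}
\]
once $S\ge 4\kappa(n/2+1)$. This holds comfortably under $S\ge64\kappa n^3/\epsilon$. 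Also $\|d\|_2\le\sqrt n(S+\tfrac12)$, so
\[
    \|y\|_\infty\le\|y\|_2\le\frac{4\kappa\sqrt n(S+\tfrac12)}{S}\le 6\kappa\sqrt n .
\]
It then remains to check that $8n^2(6\kappa\sqrt n+1)\le 64\kappa n^3\le\epsilon S$.

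I expect the main obstacle to be constant bookkeeping in this last step. The check $8n^2(6\kappa\sqrt n+1)\le64\kappa n^3$ is tight only in small cases such as $n=1$. If the crude bounds fall short there, I would first sharpen the estimates: use $\sigma_{\min}\ge S/(2\kappa)-n/2-1/R$ directly, use $\|d\|_2\le\sqrt n(S+\tfrac12)$ without rounding up, and use $\kappa\ge1$. The first two parts are routine.
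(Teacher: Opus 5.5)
Your proposal is correct and follows essentially the same route as the paper. Your identity $A_0y-b_0=\frac1S\bigl(-\Delta y+\delta-y/R\bigr)$ is exactly the paper's decomposition via $\widetilde A=C/S+I/(RS)$ and $\widetilde b=d/S$, and your Frobenius-norm bounds, output-rounding bound, and Weyl-type lower bound on $\sigma_{\min}$ match the paper's argument. The final check you left open does hold with your crude constants: for $n,\kappa\ge1$ you have $8n^2(6\kappa\sqrt n+1)=48\kappa n^{5/2}+8n^2\le 56\kappa n^3\le 64\kappa n^3$, so no sharpening is needed. (The paper gets the tighter $4\kappa\sqrt n$ by using $|d_i|\le S$, which holds because $S$ is an integer and $\|b_0\|_\infty\le1$.)
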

\begin{proof}
Let $\widetilde A=C/S+I/(RS)$ and $\widetilde b=d/S$, so $\widetilde A y=\widetilde b$.
The rounded entries obey $|C_{ij}|,|d_i|\le S$, and $R\ge2nS$; thus Theorem~\ref{thm:solve_perturbed} applies even if $C$ is singular.
Entrywise rounding and the Frobenius norm give
\[
    \|\widetilde A-A_0\|_2\le\frac{n}{2S}+\frac1{RS}\le\frac nS,
    \qquad \|\widetilde b-b_0\|_2\le\frac{\sqrt n}{2S}.
\]
Using $A_0y-b_0=(A_0-\widetilde A)y+(\widetilde b-b_0)$,
$\|y\|_2\le\sqrt n\|y\|_\infty$, and $\|b_0\|_2\ge1/2$, the unrounded relative residual is at most
$(2n\sqrt n\|y\|_\infty+\sqrt n)/S\le2n^2(\|y\|_\infty+1)/S$.
Since $\|A_0\|_2\le n$, output rounding contributes at most $n\sqrt n/P\le n^2/P$.
This proves~\eqref{eq:rounding_certificate_bound} and the certificate.

For the final assertion, write $\kappa_0=\kappa_2(A_0)$.
An entry of magnitude at least $1/2$ implies $\sigma_{\min}(A_0)\ge1/(2\kappa_0)$.
Once $S\ge4\kappa_0n$, the preceding perturbation bound gives
$\sigma_{\min}(\widetilde A)\ge1/(4\kappa_0)$, and hence
$\|y\|_\infty\le\|y\|_2\le4\kappa_0\|d/S\|_2\le4\kappa_0\sqrt n$.
Thus $\|y\|_\infty+1\le5\kappa_0n$, and
$S\ge64\kappa_0n^3/\epsilon$ implies~\eqref{eq:rounding_certificate_test}.
\end{proof}

\section{Floating-Point Inputs and Proof of the Main Theorem}\label{sec:floating_point}
We now describe the conversion from floating-point inputs to an integer system we can pass to \textsc{SolvePerturbed}.
There are three separate operations: normalize the input scales, round the normalized entries to short integers, and convert the exact working solution back to the original scale.
We first explain how to choose the rounding precision from a supplied condition-number bound, and then replace that choice by a certified precision search.

\subsection{Normalization and conversion to integers}\label{sec:fp_conversion}
Return $(q,\gamma)=(0,0)$ if $b=0$; henceforth assume $b\ne0$.
For a nonzero integer $u$, let $\operatorname{len}(u)=\lfloor\log_2|u|\rfloor+1$.
An input entry $u2^e$ has magnitude in $[2^{e+\operatorname{len}(u)-1},2^{e+\operatorname{len}(u)})$.
Let $\alpha$ and $\beta$ be the maxima of $e+\operatorname{len}(u)$ over the nonzero entries of $A$ and $b$, respectively, and define
\[
    A_0=2^{-\alpha}A,\qquad b_0=2^{-\beta}b.
\]
This normalization is performed by subtracting exponents, not by expanding $2^\alpha$ or $2^\beta$.
It gives
\[
    \tfrac12\le\max_{i,j}|(A_0)_{ij}|<1,\qquad
    \tfrac12\le\|b_0\|_\infty<1,
    \qquad \kappa_2(A_0)=\kappa_2(A).
\]
Thus $A_0,b_0$ satisfy the hypotheses of Lemma~\ref{lem:rounding_certificate}.
Moreover, for every vector $w$,
\begin{equation}\label{eq:fp_scale_identity}
    A(2^{\beta-\alpha}w)-b=2^\beta(A_0w-b_0),
    \qquad \|b\|_2=2^\beta\|b_0\|_2.
\end{equation}
Consequently, a relative-residual guarantee for the normalized system transfers unchanged to the original system.

For a chosen integer precision $h\ge1$, set $S=2^h$ and form
\begin{equation}\label{eq:fp_integer_conversion}
    C=\operatorname{round}(SA_0),\qquad
    d=\operatorname{round}(Sb_0).
\end{equation}
All subsequent lifting arithmetic uses the exact integers $C,d$, not the original floating-point entries.
Equivalently, $C/S$ and $d/S$ are fixed-point approximations to $A_0$ and $b_0$ with entrywise error at most $1/(2S)$.
The entries of $C,d$ have magnitude at most $S$, hence $O(h)$ bits, and $C$ has at most $m$ nonzeros.
Only the original nonzero entries need to be visited.

More explicitly, an original matrix entry $A_{ij}=u2^e$ is converted to
\[
    C_{ij}=\operatorname{round}\!\left(u2^{e-\alpha+h}\right),
\]
and the corresponding formula for a right-hand-side entry replaces $\alpha$ by $\beta$.
Thus conversion uses a binary shift followed, if needed, by rounding to the nearest integer with a fixed tie rule.
For a normalized entry $u2^\eta$, let $k=\eta+h$.
If $k+\operatorname{len}(u)\le-1$, then $|u2^k|<1/2$ and the converted entry is zero; this is detected by comparing encoded exponents.
Otherwise $|k|=O(B_{\rm sig}+h)$: a nonnegative $k$ calls for a left shift, while a negative $k$ calls for division by $2^{-k}$ and rounding.
Note that no huge denominator is constructed for an entry far below the rounding threshold.
This is why the conversion cost depends on exponent encoding lengths, rather than on their numerical magnitudes.

\subsection{The algorithm with a known condition-number bound}\label{sec:fp_known_condition}
Suppose for now that an integer upper bound $\bar\kappa\ge\kappa_2(A)$ is known.
For the requested tolerance $\epsilon=2^{-E}$, choose
\begin{align*}
    \ell&=\lceil\log_2 n\rceil,&
    h_0&=6+3\ell+E,&
    h&=h_0+\lceil\log_2\bar\kappa\rceil,\\
    S&=2^h,&
    p&=E+2\ell+2,&
    P&=2^p.
\end{align*}
The input-rounding scale is $S$; the output-rounding scale is $P$.
Convert the normalized inputs to $C,d$ using~\eqref{eq:fp_integer_conversion}.
Set $R=2^{h+\ell+1}\ge2nS$, and make one exact integer-core call:
\begin{equation}\label{eq:fp_integer_solve}
    z=\textsc{SolvePerturbed}(C,d,R),\qquad y=Rz.
\end{equation}
Since $(I+RC)z=d$, the vector $y$ solves
\[
    \left(\frac CS+\frac{I}{RS}\right)y=\frac dS.
\]
In particular, $y$ is a candidate for the normalized solution, not for the original-scale solution; no additional factor of $S$ is needed.
Round its coordinates to the output grid and undo the normalization by returning
\begin{equation}\label{eq:fp_output}
    q_i=\operatorname{round}(Py_i),\qquad
    \gamma=\beta-\alpha-p,
    \qquad \widehat x_i=q_i2^\gamma.
\end{equation}
Indeed, if $\widehat y=q/P$, then $\widehat x=2^{\beta-\alpha}\widehat y$.
Only the integers $q_i$ and $\gamma$ are written; the power $2^\gamma$ is not expanded.

The prescribed precision gives $S\ge64\bar\kappa n^3/\epsilon$ and $P\ge4n^2/\epsilon$.
The final assertion of Lemma~\ref{lem:rounding_certificate} therefore guarantees that its certificate holds and that $\widehat y$ has normalized relative residual at most $\epsilon/2$.
Equation~\eqref{eq:fp_scale_identity} gives the same guarantee for $\widehat x$ against $A,b$.
Thus, with a valid supplied bound, the algorithm needs only one input conversion and one call to \textsc{SolvePerturbed}, and no precision search or acceptance test is necessary.

\subsection{Removing the condition-number bound}\label{sec:fp_unknown_condition}
Without $\bar\kappa$, retain the same normalization and output scale $P$, but start at $h=h_0$.
At each trial, form $S,C,d,R,z,y$ exactly as above using the current $h$.
Accept the trial if
\[
    8n^2(\|y\|_\infty+1)\le\epsilon S,
\]
which is the certificate~\eqref{eq:rounding_certificate_test}.
On acceptance return~\eqref{eq:fp_output}; otherwise replace $h$ by $2h$ and repeat.
The integer data $C,d$ are recomputed from the original encoded inputs at the new precision.
This repetition and the acceptance test are the only additions to the known-bound algorithm; neither $\kappa$ nor an estimate of it is computed.

The test uses only the exact reconstructed fractions, not products with the original floating-point data.
If $z_i=a_i/g_i$ with $g_i>0$, it is equivalent to the $n$ integer comparisons
\begin{equation}\label{eq:fp_integer_certificate}
    2^E\,8n^2(R|a_i|+g_i)\le S g_i
    \qquad(1\le i\le n).
\end{equation}
An insufficient precision can make $C$ singular, but every trial is still defined, since $I+RC$ is nonsingular, and Theorem~\ref{thm:solve_perturbed} explicitly permits singular $C$.
Note that the test is sufficient rather than necessary, since rejecting a candidate does not guarantee that its actual residual is large.

\subsection{Correctness and complexity}
\begin{proof}[Proof of Theorem~\ref{thm:main}]
The zero right-hand side was handled separately.
For every other input, Lemma~\ref{lem:rounding_certificate} and~\eqref{eq:fp_scale_identity} show that every accepted output has the required original relative residual.
A trial must pass once
\[
    h\ge h_0+\lceil\log_2\kappa_2(A)\rceil,
\]
since this ensures $S\ge64\kappa_2(A)n^3/\epsilon$.
The largest tested $h$ is therefore $O(\rho)$, the sum of all tested $h$ is $O(\rho)$, and there are $O(\log(\rho+1))$ trials.
These bounds use the actual condition number only in the analysis, not in the algorithm.

At precision $h$, Section~\ref{sec:fp_conversion} implements each input conversion using $O(B_{\rm sig}+h)$ actual significand bits.
Exponent arithmetic uses $O(B_{\rm exp}+\log(B_{\rm sig}+h+1))$ bits, so the cost per entry is $\Ot(B_{\rm sig}+B_{\rm exp}+h)$.
Computing the normalization exponents has the same input-linear cost.
Since $C$ has at most $m$ nonzeros and $\log_2R=h+\ell+1=O(h)$,
Theorem~\ref{thm:solve_perturbed} gives $\Ot((mn+n^2)h)=\Ot(mnh)$ work for each exact solve, using $m\ge n$ for the original nonsingular matrix.
The recovered numerators and denominators have $O(nh)$ bits.
As $E+\log(n+1)=O(h)$, all comparisons in~\eqref{eq:fp_integer_certificate} cost $\Ot(n^2h)$ bit operations.
On acceptance, forming each $q_i$ requires a shift and integer division on $O(nh)$-bit integers, again costing $\Ot(n^2h)$ in total.
Summing over trials gives
\[
    \Ot\!\left((m+n)(B_{\rm sig}+B_{\rm exp})+mn\rho\right)
\]
bit operations, and repeated input conversion contributes only a logarithmic factor to the input-reading term.

Retaining the original encoded inputs uses $O((m+n)(B_{\rm sig}+B_{\rm exp})+m\log(n+1))$ bits.
The converted data, digit vectors, reconstructed fractions, and scalar workspaces use $\Ot((m+n^2)h)=\Ot(n^2h)$ bits at one trial, and failed-trial storage can be reused.
The sparse-index term is covered by $n^2\rho$.
This proves the stated storage bound.

Finally, acceptance gives $\|y\|_\infty+1\le\epsilon2^h/(8n^2)$, and $|q_i|\le P\|y\|_\infty+1/2$.
Since $p=O(E+\log(n+1))$ and $h=O(\rho)$, each $q_i$ has $O(\rho)$ bits.
The definitions of $\alpha,\beta$ give
\[
    |\gamma|\le2^{B_{\rm exp}+1}+2B_{\rm sig}+p,
\]
so its encoding has $O(B_{\rm exp}+\log(B_{\rm sig}+\rho+1))$ bits.
Only this encoding, not the expanded value $2^\gamma$, is written.
\end{proof}

\paragraph{Fixed-point inputs.}
For $B$-bit fixed-point inputs, where the bit count includes the integer and fractional parts, take $B_{\rm sig}=O(B)$ and $B_{\rm exp}=O(\log(B+1))$.
The normalization exponents $\alpha,\beta$ have magnitude $O(B)$, so $|\gamma|=O(B+E+\log(n+1))$.
Thus the output can be expanded exactly into fixed-point form using $O(B+\rho)$ bits per coordinate and $O(n(B+\rho))$ additional bit operations.
In particular, with $\tau=\max\{B,\log\kappa,E\}$, this gives $\Ot(mn\tau)$ bit operations, $\Ot(n^2\tau)$ bits of storage, and $\Ot(\tau)$ bits per output coordinate.

Note that a compact output representation is necessary in general to avoid expanding very large powers of two.
For example, if $A=2^{-D}I$ and $b=e_1$, with $D$ encoded in binary, $\kappa_2(A)=1$ but an ordinary fixed-point output of relative residual less than $1/2$ needs $\Omega(D)$ bits in its first coordinate.

\section{Acknowledgments}
 The author would like to thank Richard Peng for reading and providing helpful comments about an early draft of this paper.
 \subsection{AI Disclosure}
 The author interacted with GPT 5.6 Sol and GPT 6 Astra at various points in this project to help explore and numerically test a long list of approaches to this problem. 
 The author also used these models and Claude Opus 5.5 during the writing, primarily to assist with literature search and to proofread and polish the exposition, but also to help write out some aspects of the rounding analysis for floating-point inputs (which he expected would be messier than they ended up being).	
 
\bibliographystyle{plain}
\bibliography{references}
\end{document}